\newtheorem{theorem}{Theorem}
\newtheorem{lemma}[theorem]{Lemma}
\newtheorem{definition}[theorem]{Definition}
\newtheorem{corollary}[theorem]{Corollary}
\newcommand{\Desc}[2]{\makebox[10ex][l]{\textbf{#1:}} #2\\}
\newcommand{\period}{.}
\newcommand{\BCube}[1]{\qty{0,1}^{#1}}
\newcommand{\Reg}[2]{\ket{#1}_{\mathtt{#2}}}
\newcommand{\HW}[1]{\abs{#1}}
\newcommand{\HD}[2]{\abs{#1\oplus #2}}
\newcommand{\Spl}[2]{G^{(#1,#2)}}
\newcommand{\Elm}[1]{\hat{#1}}
\newcommand{\Flip}[1]{\overline{#1}}
\newcommand{\Conj}[1]{#1^*}
\newcommand{\ceil}[1]{\left\lceil #1 \right\rceil}
\newcommand{\floor}[1]{\left\lfloor #1 \right\rfloor}
\DeclareMathOperator{\CNOT}{CNOT}
\DeclareMathOperator{\Tof}{Tof}
\newcommand{\Ord}[1]{O\qty(#1)}
\newcommand{\ord}[1]{o\qty(#1)}
\newcommand{\Tht}[1]{\Theta\qty(#1)}
\newcommand{\Omg}[1]{\Omega\qty(#1)}
\newcommand{\omg}[1]{\omega\qty(#1)}
\newcommand{\ply}[1]{\operatorname{poly}\qty(#1)}
\colorlet{DiffDel}{red!20}
\colorlet{DiffAdd}{green!30}
\begin{document}

\floatstyle{ruled}
\newfloat{algorithm}{tbhp}{lot}
\floatname{algorithm}{Algorithm}

\title{Towards Optimal Circuit Size for Sparse Quantum State Preparation}

\author{%
  \IEEEauthorblockN{%
    Rui Mao \IEEEauthorrefmark{1}\IEEEauthorrefmark{2}\IEEEauthorrefmark{3},%
    Guojing Tian \IEEEauthorrefmark{1}\IEEEauthorrefmark{2}\IEEEauthorrefmark{4} and%
    Xiaoming Sun \IEEEauthorrefmark{1}\IEEEauthorrefmark{2}\IEEEauthorrefmark{5}}
  \IEEEauthorblockA{\IEEEauthorrefmark{1}%
    State Key Lab of Processors, Institute of Computing Technology, Chinese Academy of Sciences, Beijing 100190, China}
  \IEEEauthorblockA{\IEEEauthorrefmark{2}%
    School of Computer Science and Technology, University of Chinese Academy of Sciences, Beijing 100049, China}
  \IEEEauthorblockA{\IEEEauthorrefmark{3}%
    maorui21b@ict.ac.cn}
  \IEEEauthorblockA{\IEEEauthorrefmark{4}%
    tianguojing@ict.ac.cn}
  \IEEEauthorblockA{\IEEEauthorrefmark{5}%
    sunxiaoming@ict.ac.cn}
}

\maketitle

\begin{abstract}
 Compared to general quantum states, the sparse states arise more frequently in  the field of quantum computation.  In this work, we consider the preparation for $n$-qubit sparse quantum states  with $s$ non-zero amplitudes and propose two algorithms.  The first algorithm uses $O(ns/\log n + n)$ gates, improving upon previous  methods by $O(\log n)$.  We further establish a matching lower bound for any algorithm which is not  amplitude-aware and employs at most $\operatorname{poly}(n)$ ancillary qubits.  The second algorithm is tailored for binary strings that exhibit a short  Hamiltonian path.  An application is the preparation of $U(1)$-invariant state with $k$ down-spins  in a chain of length $n$, including Bethe states, for which our algorithm
  constructs a circuit of size $O\left(\binom{n}{k}\log n\right)$.  This surpasses previous results by $O(n/\log n)$ and is close to the lower  bound $O\left(\binom{n}{k}\right)$. Both the two algorithms shrink the existing gap theoretically and provide increasing advantages numerically.
\end{abstract}

\section{Introduction}

Quantum state preparation (QSP) is a crucial subroutine in various quantum
algorithms, such as Hamiltonian simulation
\cite{berry2015simulating,low2017optimal,low2019hamiltonian,berry2015hamiltonian}
and quantum machine learning
\cite{schuld2015introduction,biamonte2017quantum,kerenidis2016quantum,rebentrost2018quantum,harrow2009quantum,wossnig2018quantum,kerenidis2019q,kerenidis2021quantum,rebentrost2014quantum},
to load classical data into a quantum computer.
While preparing a general $n$-qubit quantum state requires $\Omg{2^n}$ 1- and
2-qubit quantum gates \cite{plesch2011quantum}, practical scenarios often
involve structured classical data, allowing for a reduction in circuit size
\cite{gleinig2021efficient,araujo2021entanglement,rattew2022preparing}.
One typical example is sparse quantum state preparation (SQSP), where the data
to be loaded is sparse.
Sparse states are not only efficient to prepare but also hold practical
significance.
Many interesting quantum states are sparse, including the GHZ states, W states
\cite{dur2000three}, Dicke states \cite{bartschi2019deterministic}, thermofield
double states \cite{cottrell2019build}, and Bethe states
\cite{van2021preparing}.
Moreover, several quantum algorithms specifically require sparse initial states, such as the quantum Byzantine agreement algorithm \cite{ben2005fast}.

We briefly summarize previous work on SQSP.
Gleinig \textit{et al.} presented an efficient algorithm for SQSP by
reducing the cardinality of basis states \cite{gleinig2021efficient}.
The Ref. \cite{malvetti2021quantum} transformed SQSP to QSP by pivoting a sparse
state into a smaller general state, which serves as a building block in their
algorithm for constructing sparse isometries.
In \cite{de2022double}, the authors proposed the CVO-QRAM algorithm, whose cost depends
on both the sparsity of the state and the Hamming weights of the bases.
Decision diagrams have been employed in \cite{mozafari2022efficient} to reduce circuit
size for states with symmetric structures.
Their algorithm also efficiently prepares sparse states.
In \cite{ramacciotti2023simple}, it was demonstrated that the Grover-Rudolph
algorithm \cite{grover2002creating} for QSP is also efficient for SQSP, and
gave an improvement based on a similar idea to
\cite{malvetti2021quantum}'s.
While their techniques differ, the aforementioned algorithms require $\Tht{sn}$
1- and 2-qubit gates to prepare an $n$-qubit states with $s$ non-zero entries
in the worst case.
On the other hand, the trivial lower bound of circuit size is $\Omg{s+n}$ with
no better lower bound currently known to the best of our knowledge, leaving an
open gap of $\Ord{n}$.
Additionally, there are efforts to reduce the circuit depth.
\cite{zhang2022quantum} achieved a circuit depth of $\Theta(\log(sn))$
for SQSP at the expense of using $\mathcal{O}(sn\log s)$ ancillary qubits.

In this work, we aim at the optimizing circuit size for SQSP.
We develop two algorithms within the framework of CVO-QRAM.
The first algorithm achieves a circuit size of $\Ord{\frac{ns}{\log n} + n}$,
improving previous upper bound by a factor of $\Ord{\log n}$.
We show that our algorithm is asymptotically optimal under certain assumptions
of SQSP algorithm.
One of the assumptions, which is satisfied by previous SQSP algorithms, is that
the algorithm must not be amplitude-aware.
That is, the structure of the output circuit depends solely on the basis states
rather than the amplitudes.
The second algorithm performs better when the basis states can be efficiently
iterated, achieving a circuit size of $\Ord{L \log n}$, where $L$ is the number
of bits flipped during iteration.
This algorithm can be applied to prepare $U(1)$-invariant states, surpassing
previous work by a factor of $\Ord{\frac{n}{\log n}}$.
The comparison of our algorithms with previous ones is summarized in \mbox{\cref{tab:430v}}.

\begin{table}[htbp]
  \centering
  \caption{Comparison of circuit complexity for sparse state preparation and $U(1)$-invariant state preparation}
  \label{tab:430v}
  \begin{tabular}{|c|c|c|c|}
    \hline
    Task                    & Algorithm                     & Circuit Size                & \#Ancilla  \\
    \hline
    \multirow{6}{*}{Sparse} & \cite{gleinig2021efficient}   & $O(ns)$                     & 0          \\
                            & \cite{malvetti2021quantum}    & $O(ns)$                     & 0          \\
                            & \cite{de2022double}           & $O(ns)$                     & 0          \\
                            & \cite{mozafari2022efficient}  & $O(ns)$                     & 1          \\
                            & \cite{ramacciotti2023simple}  & $O(ns)$                     & 0          \\
                            & \textbf{BE-QRAM}              & $\Ord{\frac{ns}{\log n}}$  & 2 \\
    \hline
    \multirow{2}{*}{$U(1)$} & \cite{raveh2024deterministic} & $\Ord{\binom{n}{k} n}$      & 0          \\
                            & \textbf{LT-QRAM}              & $\Ord{\binom{n}{k} \log n}$ & $O(n)$     \\
    \hline
  \end{tabular}
\end{table}

The remainder of the paper is structured as follows.
In \cref{sec:6zw4}, we define terminology and give the problem formulation.
In \cref{sec:8aj1}, we review the CVO-QRAM algorithm.
In \cref{sec:yakv}, we introduce our first algorithm and proves its optimality
under assumptions.
In \cref{sec:ssul}, we present our second algorithm and demonstrate its
application to $U(1)$-invariant states.
We conclude and discuss our results in \cref{sec:u6jp}.

\section{Preliminary}\label{sec:6zw4}
\paragraph{Notations}
Denote $[n] = \qty{1, 2, \dots, n}$.
All logarithms $\log (\cdot)$ are base 2.
For binary strings $x, y \in \BCube{n}$: $x_i$ is the $i$th bit of $x$;
$\HW{x}$ means the Hamming weight of $x$; define $\Flip{x}$ such that
$\Flip{x}_i = 1 - x_i, \forall i \in [n]$; denote the bitwise XOR of $x, y$ by
$x \oplus y$; define the corresponding $n$-qubit basis state as $\ket{x}$ and
$\ket{x_{\qty{i_1, \dots, i_k}}} = \ket{x_{i_1}, \dots, x_{i_k}}$.
Denote the complex conjugate of $\alpha \in \mathbb{C}$ by $\Conj{\alpha}$.

\paragraph{(Sparse) quantum state}
In a quantum system with $n$ qubits, the \emph{computational basis states} are
$2^n$ orthonormal vectors $\ket{0\dotsc 00}, \ket{0\dotsc 01}, \dotsc,
  \ket{1\dotsc 11}$, each identified as a binary string of length $n$.
A general \emph{quantum state} can be expressed as a normalized linear
combination of these computational basis states:
\begin{equation}
  \ket{\phi} = \sum_{x \in \BCube{n}} c_x \ket{x}, \quad \sum_{x \in
    \BCube{n}} \abs{c_x}^2 = 1.
\end{equation}
The coefficients $c_x$ are known as \emph{amplitudes}.
A quantum state is considered \emph{sparse} if only a small fraction of its
amplitudes are non-zero.
The \emph{sparsity} of a quantum state refers to the number of non-zero
amplitudes.
Call the set of binary strings corresponding to non-zero amplitudes the
\emph{binary support} of a sparse quantum state.

\paragraph{Quantum gate and quantum circuit}
A $k$-qubit \emph{quantum gate} is a $2^k\times 2^k$ unitary transformation
that acts on a $k$-qubit subsystem.
Our work utilizes the following quantum gates:
\begin{itemize}
  \item The $X$ gate is a 1-qubit gate that flips a qubit between $\ket{0}$ and
        $\ket{1}$.
  \item The CNOT gate is a 2-qubit gate that applies an $X$ gate to the \emph{target
          qubit} when the \emph{control qubit} is $\ket{1}$.
  \item The Toffoli gate is a 3-qubit gate that applies an $X$ gate to a target qubit
        when the other two control qubits are both $\ket{1}$.
        It can be decomposed into 10 1-qubit gates and 6 CNOT gates
        \cite{nielsen2010quantum}.
  \item The $C^t X$ is a $(t+1)$-qubit gate that applies an $X$ gate to a target qubit
        when the other $t$ target qubits are all $\ket{1}$.
        This gate is known as a multi-controlled Toffoli gate, and can be decomposed
        into $\Ord{n}$ 1-qubit gates and CNOT gates \cite{gidney2015constructing}.
  \item The $\Spl{\alpha}{\beta}$ gate is a 1-qubit gate with
        $\alpha\in\mathbb{C},\beta\in\mathbb{R}_{+}$ and $\abs{\alpha} \leq \beta$.
        It is defined as:
        \begin{equation}
          \Spl{\alpha}{\beta} = \frac{1}{\beta}
          \begin{pmatrix}
            -\sqrt{\beta^2 - \abs{\alpha}^2} & \alpha                          \\
            \Conj{\alpha}                    & \sqrt{\beta^2 - \abs{\alpha}^2}
          \end{pmatrix}
          .
        \end{equation}
  \item The $C^t \Spl{\alpha}{\beta}$ gate is a $(t+1)$-qubit gate that applies the
        $\Spl{\alpha}{\beta}$ gate to a target qubit when the other $t$ target qubits
        are all $\ket{1}$.
        This gate can be decomposed into two 1-qubit gates and one multi-controlled
        Toffoli gate \cite{barenco1995elementary}.
\end{itemize}
The qubits a gate acts on are indicated by writing their indices in the
subscript of the gate.
In the case of (multi-)controlled gates, the indices of the control qubits and
the target qubit are separated by a semicolon.
For example, $X_j$ represents the $X$ gate acting on the $j$th qubit, and
$\Tof_{i,j;k}$ represents the Toffoli gate with the $i$th and $j$th qubit as
controls and the $k$th qubit as the target.

An $n$-qubit \emph{quantum circuit} is a sequence of quantum gates that
implements a $2^n \times 2^n$ unitary transformation.
The set of all 1-qubit gates and CNOT gates is considered \emph{universal}
because they can be used to implement any unitary transformation.
For convenience, we refer to 1-qubit gates and CNOT gates as \emph{elementary
  gates}.
In our study, the \emph{size} of a circuit is determined by the total number of
elementary gates after decomposing all gates into elementary ones.

We will also encounter the concept of \emph{parameterized gates} or
\emph{circuit}.
A $k$-qubit parameterized gates are maps from some real or complex parameters
to $2^k \times 2^k$ unitaries.
For example, $\Spl{\alpha}{\beta}$ and $C^t \Spl{\alpha}{\beta}$ are said to be
parameterized gates, if $\alpha, \beta$ are unspecified variables.
Similarly, an $n$-qubit parameterized circuit is a sequence of
(un)parameterized gates, that implements a map from some parameters to $2^n
  \times 2^n$ unitaries.

\paragraph{Problem formulation}
In this work, we consider the preparation of sparse quantum states using
quantum circuit with quantum elementary gates.
Specifically, we are given a set of binary strings $S = \qty{x^j \in \BCube{n}
    : 1 \leq j \leq s}$ of length $n$ with a size of $s$, and amplitudes $\qty{c_j
    \in \mathbb{C} : 1 \leq j \leq s}$ satisfying $\sum_{j=1}^s \abs{c_j}^2 = 1$.
Our goal is to output a quantum circuit $C_{S,\bm{c}}$ that prepares the $n$-qubit,
$s$-sparse quantum state $\ket{\phi_{S,\bm{c}}} = \sum_{j=1}^s c_j \ket{x^j}$ using
$m$ ancillary qubits, such that:
\begin{equation}
  C_{S,\bm{c}}(\ket{0^n}\ket{0^m}) = \ket{\phi_{S,\bm{c}}}\ket{0^m}.
\end{equation}
Our objective is to minimize the size of the quantum circuit $C_{S,\bm{c}}$.

\section{Review of the CVO-QRAM algorithm}\label{sec:8aj1}
The CVO-QRAM algorithm \cite{de2022double} constructs a circuit of size
$\Tht{\sum_{j=1}^s \HW{x^j}}$ with one ancillary qubit to prepare any $n$-qubit
$s$-sparse state $\ket{\phi} = \sum_{j=1}^s c_j \ket{x^j}$.
In this section, we briefly review this algorithm.
We will show how to improve upon this algorithm in the following sections.

CVO-QRAM loads the amplitudes and binary strings one by one in a specific
order, utilizing one ancillary qubit to distinguish between loaded and being
processed terms.
Let $\Reg{\bm{m}}{M}$ denote the $n$-qubit \emph{memory register}, and
$\Reg{f}{F}$ denote the 1-qubit \emph{flag register}.
At the beginning of the algorithm, the registers are in the state
$\Reg{0^n}{M}\Reg{1}{F}$.
As proved in the following lemma, the algorithm deterministically transforms
the registers into the state $\Reg{\phi}{M}\Reg{0}{F}$.

\begin{lemma}[CVO-QRAM \cite{de2022double}]
  Any $n$-qubit $s$-sparse state $\ket{\phi} = \sum_{j=1}^s c_j \ket{x^j}$ can be
  prepared using $\Tht{\sum_{j=1}^s \HW{x^j}}$ elementary gates and one ancillary
  qubit.
\end{lemma}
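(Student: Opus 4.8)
The plan is to load the $s$ terms one at a time, using the flag qubit to separate the already-loaded part (flag $=0$) from the yet-unprocessed weight (flag $=1$), and to maintain a loop invariant on the joint memory–flag register. First I would fix a processing order: relabel the strings so that $\HW{x^1}\le\HW{x^2}\le\dots\le\HW{x^s}$, and set $\beta_j=\sqrt{\sum_{i\ge j}\abs{c_i}^2}$, so that $\beta_1=1$ and $\beta_{s+1}=0$; note $\abs{c_j}\le\beta_j$, so each $\Spl{c_j}{\beta_j}$ is well defined. The claim to establish by induction is that after processing $x^1,\dots,x^{j-1}$ the registers are in the state
\begin{equation}
  \Big(\sum_{i<j} c_i \Reg{x^i}{M}\Big)\Reg{0}{F} + \beta_j \Reg{0^n}{M}\Reg{1}{F},
\end{equation}
which holds trivially at $j=1$ (the initial state $\Reg{0^n}{M}\Reg{1}{F}$) and yields the target $\Reg{\phi}{M}\Reg{0}{F}$ at $j=s+1$.

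For the inductive step I would apply three sub-operations to load $x^j$. (i) For each $k$ with $x^j_k=1$, apply $\CNOT_{F;k}$; since the flag$=1$ branch carries memory $\Reg{0^n}{M}$, this rewrites that branch as $\beta_j\Reg{x^j}{M}\Reg{1}{F}$ while leaving every flag$=0$ term untouched (their control bit is $0$). (ii) Apply $\Spl{c_j}{\beta_j}$ to the flag qubit, multi-controlled on the memory qubits indexed by $T_j=\qty{k:x^j_k=1}$ all being $\ket{1}$. The matrix computation $\Spl{c_j}{\beta_j}\ket{1}=\tfrac{1}{\beta_j}\qty(c_j\ket{0}+\beta_{j+1}\ket{1})$, using $\beta_j^2-\abs{c_j}^2=\beta_{j+1}^2$, peels off the correctly weighted term $c_j\Reg{x^j}{M}\Reg{0}{F}$ and leaves $\beta_{j+1}\Reg{x^j}{M}\Reg{1}{F}$. (iii) Repeat the $\CNOT_{F;k}$ of step (i): this resets only the surviving flag$=1$ branch back to $\Reg{0^n}{M}$, restoring the invariant at index $j+1$.

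The crux of the correctness argument — and the step I expect to be the main obstacle — is verifying that the multi-controlled gate in (ii) acts on the target branch $\Reg{x^j}{M}\Reg{1}{F}$ and on nothing else. It can only fire on a basis state whose memory has all of $T_j$ set to $\ket{1}$; among the loaded terms $\Reg{x^i}{M}$ ($i<j$) this happens exactly when $T_j\subseteq T_i$. Here the ordering is essential: since $\HW{x^i}\le\HW{x^j}$, an inclusion $T_j\subseteq T_i$ would force $\abs{T_j}=\abs{T_i}$ and hence $T_j=T_i$, i.e. $x^i=x^j$, contradicting distinctness; thus $T_j\not\subseteq T_i$ and no loaded term is disturbed. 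I would dispose of the all-zero string (if present) separately: it has weight $0$, is processed first, and although its controlled $\Spl{}{}$ then has empty control set, no other term is yet loaded, so the argument still goes through.

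Finally I would count gates. Steps (i) and (iii) contribute exactly $2\HW{x^j}$ CNOTs, and step (ii) is one $\qty(\HW{x^j}+1)$-qubit controlled-$\Spl{c_j}{\beta_j}$, which by the preliminaries decomposes into two $1$-qubit gates and one multi-controlled Toffoli on $\HW{x^j}$ controls, costing $\Ord{\HW{x^j}}$ elementary gates. Summing over $j$ gives total size $\Ord{\sum_{j=1}^s \HW{x^j}}$, while the $2\HW{x^j}$ CNOTs furnish the matching lower bound, so the circuit has size $\Tht{\sum_{j=1}^s \HW{x^j}}$ and uses only the single flag ancilla, as claimed.
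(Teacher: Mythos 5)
Your proof is correct and takes essentially the same route as the paper's: the same Hamming-weight-sorted loading order, the same loop invariant (your $\beta_j$ coincides with the paper's $\gamma_j$ since the amplitudes are normalized), and the same three-step stage of flag-controlled CNOTs sandwiching a multi-controlled $\Spl{c_j}{\gamma_j}$ on the flag, with the same gate count. The only difference is presentational: you make explicit the key non-disturbance argument ($T_j \not\subseteq T_i$ for $i<j$, forced by the weight ordering, plus the all-zero-string edge case), which the paper leaves as ``it can be verified'' and motivates only in a remark.
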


\begin{proof}
  We may assume that $(c_1, x^1), \dots, (c_s, x^s)$ are sorted such that
  $\HW{s^j} \leq \HW{s^{j+1}}$ for $1 \leq j \leq s-1$.
  Let $\gamma_j := \sqrt{1 - \sum_{i=1}^{j-1} \abs{c_i}^2}$ for $1 \leq j \leq
    s$.
  \Cref{alg:4som} performs the transformation:
  \begin{equation}
    \Reg{0^n}{M}\Reg{1}{F} \to \Reg{\phi}{M}\Reg{0}{F}.
  \end{equation}

  \begin{algorithm}
    \raggedright

      \Desc{Input}{$\qty{(c_j, x^j)}_{1 \le j \le s}$}
      \Desc{Output}{A circuit performing $\Reg{0^n}{M}\Reg{1}{F} \to \Reg{\phi}{M}\Reg{0}{F}$}
      \vspace{1ex}

    For the $j$th string, with $1 \le j \le s$:
    \begin{enumerate}
      \item $\Reg{0^n}{M} \to
              \Reg{x^j}{M}$ when $\Reg{f}{F} = \ket{1}$.
      \item Apply $C^{t} \Spl{c_j}{\gamma_j}$ with controls on all $t = \HW{x^j}$ qubits
            where $x^j_k = 1$ and target on $\Reg{f}{F}$.
      \item $\Reg{x^j}{M} \to \Reg{0^n}{M}$ when $\Reg{f}{F} = \ket{1}$.
    \end{enumerate}

    \caption{\raggedright The CVO-QRAM algorithm.}
    \label{alg:4som}
  \end{algorithm}

  We first prove the correctness of \cref{alg:4som}.
  It can be verified  that at the beginning of the $j$th stage for $1 \leq j
    \leq s$, the registers are in the following state:
  \begin{equation}
    \sum_{i=1}^{j-1} c_{i}\Reg{x^{i}}{M}\Reg{0}{F} + \gamma_j\Reg{0^{n}}{M}\Reg{1}{F}.
  \end{equation}
  Hence, at the end of the algorithm, the registers will be in the state
  $\Reg{\phi}{M}\Reg{0}{A}$ with probability 1.

  Next, let us analyze the circuit size.
  In the $j$th stage, the 1st and 3rd step can each be implemented by $\HW{x^j}$
  CNOT gates, while the 2nd step can be implemented by $\Tht{\HW{x^j}}$
  elementary gates.
  Therefore, the entire circuit can be implemented by $\Tht{\sum_{j=1}^s
      \HW{x^j}}$ elementary gates.
\end{proof}

We make two remarks regarding the CVO-QRAM algorithm.
\begin{enumerate}
  \item The size of the constructed circuit is proportional to the number of 1s in all
        the $s$ binary strings.
        Hence, this algorithm excels when the state to prepare is \emph{double sparse},
        i.e., the state is sparse \emph{and} the binary strings in the
        binary support have low Hamming weights.
        In the worst case, the circuit size of this algorithm is still $\Tht{sn}$.
        In the next section, we will demonstrate that the total number of 1s
        can be reduced for \emph{any} set of binary strings, thus improving the worst
        case performance compared to CVO-QRAM.
  \item The $\Spl{c_j}{\gamma_j}$ gate has to be multi-controlled to ensure that the
        loaded terms are not affected.
        The straightforward solution is to apply the $\Spl{c_j}{\gamma_j}$ gate when
        $\Reg{\bm{m}}{M} = \ket{x^{j}}$.
        However, this implementation requires $\Tht{n}$ elementary gates.
        The trick employed by the CVO-QRAM algorithm is to sort the binary strings
        according to their Hamming weights, resulting in a
        reduction in the number of control qubits.
        The multi-controlled $\Spl{c_j}{\gamma_j}$ will also be the
        bottleneck of our algorithms, and we will use different techniques to address
        it.
\end{enumerate}

\section{BE-QRAM}\label{sec:yakv}
In this section, we present an enhanced version of the CVO-QRAM
algorithm, which we refer to Batch-Elimation QRAM (BE-QRAM), that
reduces the worst-case circuit size from
  $\Ord{sn}$ to $\Ord{sn / \log n}$.
Additionally, we establish a corresponding lower bound for a specific class of
algorithms.

As mentioned in the last section, the cost of the CVO-QRAM algorithm is propotional to the number of 1s in all the $s$ binary strings.
To achieve this improvement, we eliminate 1s in the binary strings being
prepared, leveraging the insight that any $\log_2(n) - \omg{1}$ binary strings
have at most $\ord{n}$ distinct positions, as supported by the following lemma.
It is worth noting that this observation has been previously utilized to
optimize reversible circuits \cite{markov2008optimal,zakablukov2017asymptotic}.

\begin{lemma}
  \label{lem:dgsw}
  Let $n, k, t, r \in \mathbb{N}$ where $k < \log n, t = 2^{k}, r = n - t$.
  Given $S_k = \qty{y^{1}, \dots, y^{k}} \subseteq \BCube{n}$, there exists an
  index set $\mathcal{T} \subseteq [n]$ of size $t$ such that for all $i \in \mathcal{R} = [n]
    \setminus \mathcal{T}$, there exists $j \in \mathcal{T}$ satisfying:
  \begin{equation}
    y_{i}^{p} = y_{j}^{p}, \quad \forall p \in [k].
  \end{equation}
  Furthermore, there exists a circuit $C$ consisting of $r$ CNOT gates that
  transforms $\ket{y^{p}}$ to $\ket{\Elm{y}^{p}}$ for all $p \in [k]$, where
  $\Elm{y}^{p} \in \{0,1\}^{n}$ and
  \begin{equation}
    \Elm{y}_{i}^{p} = 0, \quad \forall i \in \mathcal{R}.
  \end{equation}
\end{lemma}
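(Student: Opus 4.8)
The plan is to exploit the fact that each coordinate $i \in [n]$ induces a \emph{column} $v_i = (y_i^1, \dots, y_i^k) \in \BCube{k}$, obtained by reading off the $i$th bit of every string in $S_k$. Since there are only $2^k = t$ possible columns, the $n$ columns $v_1, \dots, v_n$ take at most $t$ distinct values. First I would select, for each distinct column value that actually occurs, a single coordinate realizing it; this yields a set of representatives of size at most $t$. Because $k < \log n$ forces $t = 2^k < n$, there is room to enlarge this set to a set $\mathcal{T}$ of size exactly $t$ by adjoining arbitrary extra coordinates. By construction every $i \in \mathcal{R} = [n] \setminus \mathcal{T}$ has a column $v_i$ matching that of some representative $j \in \mathcal{T}$, i.e. $y_i^p = y_j^p$ for all $p \in [k]$, which is precisely the first claim. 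This is essentially a pigeonhole/covering argument.

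For the circuit, I would zero out the coordinates in $\mathcal{R}$ one at a time. Fix $i \in \mathcal{R}$ and let $j = j(i) \in \mathcal{T}$ be a representative with $v_i = v_j$. Applying $\CNOT_{j;i}$ sends the $i$th bit $y_i^p$ to $y_i^p \oplus y_j^p$; since $y_i^p = y_j^p$, this is $0$ for every $p \in [k]$ simultaneously. Thus a single CNOT clears coordinate $i$ across all $k$ strings at once. Performing this for each $i \in \mathcal{R}$ uses exactly $\abs{\mathcal{R}} = r$ CNOT gates and produces strings $\Elm{y}^p$ with $\Elm{y}_i^p = 0$ for all $i \in \mathcal{R}$, as required.

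The one point requiring care --- and the main obstacle --- is checking that these $r$ CNOTs do not interfere with one another when composed into a single fixed circuit $C$. Here I would observe that every control lies in $\mathcal{T}$ while every target lies in $\mathcal{R}$, and $\mathcal{T} \cap \mathcal{R} = \emptyset$; hence no coordinate used as a control is ever altered, and each target in $\mathcal{R}$ is touched exactly once. Consequently the gates act on disjoint targets with unchanging controls, so they commute and may be applied in any order, each independently clearing its target to the value dictated by the preserved column data. This establishes both the existence of $\mathcal{T}$ and the $r$-gate circuit $C$ realizing $\ket{y^p} \mapsto \ket{\Elm{y}^p}$ for all $p \in [k]$.
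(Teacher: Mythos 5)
Your proposal is correct and follows essentially the same argument as the paper: a pigeonhole argument on the $2^k$ possible length-$k$ patterns (your ``columns'') to choose $\mathcal{T}$, followed by one $\CNOT_{j(i);i}$ per position $i \in \mathcal{R}$ with control at a matching representative in $\mathcal{T}$. Your explicit check that controls lie in $\mathcal{T}$ and targets in $\mathcal{R}$, so the $r$ gates do not interfere, is a welcome elaboration of a step the paper leaves as ``it can be verified.''
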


\begin{proof}
  Let us call the length-$k$ binary string $y_{i}^{1} \dots y_{i}^{k}$ the
  \textit{pattern} of $S_{k}$ at position $i$.
  Since there are at most $t = 2^k$ distinct patterns, there exist $t$ patterns
  such that each of the remaining $r = n - t$ patterns coincides with one of
  them.
  We define $\mathcal{T}$ as the set of indices of these $t$ patterns.

  For every $i \in \mathcal{R}$, let $l(i)$ be an index in $\mathcal{T}$ such
  that the pattern at position $l(i)$ coincides with the pattern at position $i$.
  It can be verified that the following CNOT circuit accomplishes the desired
  transformation:
  \begin{equation}
    C = \prod_{i \in \mathcal{R}} \CNOT_{l(i);i}.
  \end{equation}
\end{proof}

As an example, \cref{fig:qdzp} illustrates the CNOT circuit that eliminates 1s
at the last 2 positions for 2 binary strings of length 6.

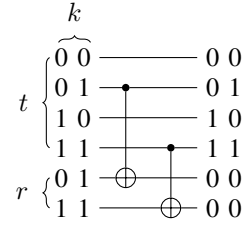
\begin{figure}
  \centering
  \begin{tikzpicture}
    \begin{yquant}

      [name=first]
      qubit {0~0} q;
      qubit {0~1} q[+1];
      qubit {1~0} q[+1];
      [name=mid 1]
      qubit {1~1} q[+1];
      [name=mid 2]
      qubit {0~1} q[+1];
      [name=last]
      qubit {1~1} q[+1];

      cnot q[4] | q[1];
      cnot q[5] | q[3];

      output {0~0} q[0];
      output {0~1} q[1];
      output {1~0} q[2];
      output {1~1} q[3];
      output {0~0} q[4];
      output {0~0} q[5];
    \end{yquant}

    \begin{scope}[decoration=brace]
      \draw [decorate] (mid 1.west) -- (first.west) node [midway,left=1ex] {$t$};
      \draw [decorate] (last.west) -- (mid 2.west) node [midway,left=1ex] {$r$};
      \draw [decorate,transform canvas={yshift=.5ex}] (first.150) -- (first.30) node [midway,above=1ex] {$k$};
    \end{scope}
  \end{tikzpicture}
  \caption{CNOT circuit eliminating 1s at the last 2 positions.}
  \label{fig:qdzp}
\end{figure}

We are now ready to present our first algorithm.
To begin with, we introduce parameters $k < \log_{2} n$, $t = 2^{k}$, and $r =
  n - t$ as in \cref{lem:dgsw}.
The specific values for these parameters will be determined later.
For simplicity, we assume that $k$ divides $s$ for now.

Our algorithm splits the $s$ binary strings into $\frac{s}{k}$ \emph{batches},
where each batch contains $k$ strings.
We will load these batches one by one, eliminating 1s in each batch following
the method described in \cref{lem:dgsw}.
Moreover, to reduce the cost of multi-controlled $\Spl{\alpha}{\beta}$ gates, we
introduce an additional ancillary qubit compared to the CVO-QRAM algorithm.
The detailed steps of the algorithm can be found in the proof of the following
theorem.

\begin{theorem}
  An $n$-qubit $s$-sparse state can be prepared by a circuit of size
  $\Ord{\frac{ns}{\log n} + n}$ with two ancillary qubits.
\end{theorem}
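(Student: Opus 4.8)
The plan is to retain the loading structure of the CVO-QRAM algorithm (\cref{alg:4som})---peeling amplitudes off a single reservoir branch held at the flag value $\Reg{1}{F}$---while using \cref{lem:dgsw} to shrink the Hamming weights that drive its cost. I would fix parameters $k<\log_2 n$, $t=2^{k}$, $r=n-t$, partition the $s$ strings into $s/k$ \emph{batches} of $k$ strings, and, for each batch $B$, invoke \cref{lem:dgsw} to obtain an index set $\mathcal{T}$ of size $t$ together with an $r$-gate CNOT circuit $C_{B}$ that compresses every string $x^{p}$ of the batch to a string $\Elm{x}^{p}$ supported entirely on $\mathcal{T}$, so that $\HW{\Elm{x}^{p}}\le t$. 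The key idea is to load the \emph{compressed} strings $\Elm{x}^{p}$, which are cheap because of their low weight, and then apply $C_{B}^{-1}$ once per batch to recover the originals $x^{p}$, thereby amortizing the $\Ord{n}$ cost of the decompression circuit over $k$ strings.

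Concretely, within batch $B$ I would process the $k$ strings exactly as in \cref{alg:4som} but against the compressed targets: for each $p$, (i) set the reservoir memory $\Reg{0^{n}}{M}\to\Reg{\Elm{x}^{p}}{M}$ by $\HW{\Elm{x}^{p}}\le t$ CNOTs controlled on $\Reg{1}{F}$; (ii) apply $C^{\HW{\Elm{x}^{p}}}\Spl{c_{p}}{\gamma_{p}}$ with controls on the (at most $t$) positions of $\Elm{x}^{p}$ and target on the flag; and (iii) reset the reservoir memory to $\Reg{0^{n}}{M}$. The second ancilla $\Reg{a}{A}$ enters here: I would use it to hold the conjunction of the split's controls via one multi-controlled Toffoli on the $\le t$ active positions, so that the split itself is applied with a single control, and---crucially---so that both the split and the subsequent decompression can be restricted to the reservoir and the current batch. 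After all $k$ strings are loaded, I would decompress the batch by applying $C_{B}^{-1}$, gated through $\Reg{a}{A}$ so that it affects only the terms just loaded, mapping each $\Reg{\Elm{x}^{p}}{M}$ back to $\Reg{x^{p}}{M}$ while leaving the reservoir ($\Reg{0^{n}}{M}$, fixed by $C_{B}$) and the previously loaded terms untouched.

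For the size analysis, each batch costs $\Ord{kt}$ for the $k$ loading rounds (each round is $\Ord{t}$) plus $\Ord{n}$ for the gated $C_{B}^{-1}$, since each of its $r=\Ord{n}$ CNOTs becomes a constant-size controlled gate. Summing over $s/k$ batches gives
\begin{equation}
  \Ord{\frac{s}{k}\qty(kt+n)} = \Ord{st+\frac{sn}{k}} = \Ord{s\,2^{k}+\frac{sn}{k}}.
\end{equation}
Balancing the two terms, i.e.\ taking $2^{k}\approx n/k$, which is achieved by $k=\log_2 n-\log_2\log_2 n=\Theta(\log n)$ (still below $\log_2 n$, as \cref{lem:dgsw} requires), makes both summands $\Ord{sn/\log n}$. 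The assumption $k\mid s$ is removed by handling a final batch of fewer than $k$ strings separately; that single batch costs $\Ord{n+kt}=\Ord{n}$, which also accounts for the additive $\Ord{n}$ in the stated bound and for the regime of small $s$.

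I expect the main obstacle to be \emph{correctness}, not the counting. Two invariants must hold simultaneously. First, the multi-controlled split must fire only on the reservoir: in \cref{alg:4som} this is guaranteed by sorting all strings by Hamming weight, so that no previously loaded $\Reg{x^{i}}{M}$ is a superset of the current controls, but compression destroys this global order because $\Elm{x}^{p}$ has artificially small weight while earlier terms may be arbitrary. Second, $C_{B}^{-1}$ must act on exactly the $k$ strings just loaded and on nothing else. I would resolve both using $\Reg{a}{A}$ as a reservoir/current-batch indicator that gates the split and the decompression, combined with an intra-batch Hamming-weight ordering of the $\Elm{x}^{p}$ (which restores the superset argument within the current batch), and I would have to verify carefully that this marker can be set and uncomputed at $\Ord{t}$ per string and $\Ord{n}$ per batch without disturbing the already-prepared superposition---this bookkeeping, rather than any single estimate, is where the proof must be watertight.
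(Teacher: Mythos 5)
Your skeleton matches the paper's---batching the $s$ strings into groups of $k<\log_2 n$, invoking \cref{lem:dgsw} per batch, the choice $k=\log n-\log\log n$, and the count $\Ord{\qty(s/k)(r+kt)}$ are all exactly right---but the mechanism you chose for protecting previously loaded terms has a genuine correctness gap, and it is precisely the one you flag in your last paragraph. You keep earlier batches in memory in their \emph{original} (decompressed) form and try to gate both the split and the per-batch decompression $C_B^{-1}$ on an ancilla marking ``reservoir or current batch.'' No such marker is locally computable: \cref{lem:dgsw} only guarantees that the compressed strings $\Elm{x}^{i,j}$ are distinct \emph{within} a batch (as images of distinct strings under one linear bijection), so a compressed string can coincide bitwise with a previously loaded original $x^{i',j'}$, or an old branch can satisfy your controls on the $\le t$ one-positions of $\Elm{x}^{i,j}$. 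Intra-batch Hamming-weight sorting does nothing across batches, and ``zero on $\mathcal{R}$'' does not discriminate either, since an old string may also vanish on $\mathcal{R}$. On such a collision the split rotates the flag of the wrong branch and the gated $C_B^{-1}$ ``decompresses'' a term that was never compressed; distinguishing new branches from old ones is in general a membership test against the entire loaded set, which has no $\Ord{t}$-per-string or $\Ord{n}$-per-batch implementation.

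The paper's BE-QRAM (\cref{alg:ep62}) avoids this by never gating the elimination circuit at all: at the start of batch $i$ it applies the CNOT circuit of \cref{lem:dgsw} \emph{unconditionally to the whole register}, so previously loaded terms are transformed along with everything else, and linear reversibility preserves pairwise distinctness of all branches, including between each $\Elm{x}^{i,j}$ and the transformed old terms. The second ancilla then plays a different role from the one you assign it: it caches the $r$-bit test $\Reg{m_{\mathcal{R}}}{M}=\ket{0^r}$ once per batch (one $C^rX$ plus $X$ gates, cost $\Ord{r}$), after which each split is controlled on the \emph{exact} pattern $\Elm{x}^{i,j}_{\mathcal{T}}$ over all $t$ positions of $\mathcal{T}$ (using $X$-conjugation for the zeros) together with that single ancilla---a condition equivalent to a full $n$-bit match, so no superset/sorting argument is needed anywhere. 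At the end of the batch the ancilla is uncomputed and the inverse elimination circuit is applied, again ungated, which simultaneously restores the old terms and decompresses the new ones. If you replace your gated decompression and sorted intra-batch loading with this conjugation-plus-cached-zero-test structure, your size analysis goes through verbatim.
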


\begin{proof}
  Let $\ket{\phi} = \sum_{i=1}^{s/k} \sum_{j=1}^k c_{i,j} \ket{x^{i,j}}$ be the state
  to be prepared, and let
  \begin{equation}
    \gamma_{i,j} = \sqrt{1 - \sum_{i'=1}^{i-1}
      \sum_{j'=1}^{k} \abs{c_{i',j'}}^2 - \sum_{j'=1}^{j-1} \abs{c_{i,j'}}^2}.
  \end{equation}

  In addition to the memory register $\Reg{\bm{m}}{M}$ and the flag register
  $\Reg{f}{F}$, we introduce an extra 1-qubit ancillary register $\Reg{a}{A}$.
  \Cref{alg:ep62} performs the transformation
  \begin{equation}
    \Reg{0^n}{M}\Reg{0}{A}\Reg{1}{F} \to \Reg{\phi}{M}\Reg{0}{A}\Reg{0}{F}.
  \end{equation}

  \def\Cond{$\Reg{m_{\mathcal{T}}}{M} = \ket{\Elm{x}^{i,j}_{\mathcal{T}}}$ and $\Reg{a}{A} = \ket{1}$}

  \begin{algorithm}
    \raggedright

      \Desc{Input}{$\qty{(c_j, x^j)}_{1 \le j \le s}$}
      \Desc{Output}{A circuit performing $\Reg{0^n}{M}\Reg{0}{A}\Reg{1}{F} \to \Reg{\phi}{M}\Reg{0}{A}\Reg{0}{F}$}
      \vspace{1ex}

    For the $i$th batch, with $1 \leq i \leq \frac{s}{k}$:
    \begin{enumerate}
      \item \label{itm:vbpm}
            Apply the elimination circuit according to
            \cref{lem:dgsw}, eliminating all 1s at positions $\mathcal{R}$ for binary
            strings in the $i$th batch.
            Recall that $\abs{\mathcal{R}} = r, \mathcal{T} = [n] \setminus \mathcal{R}$
            and $\Elm{x}^{i,j}$ denotes the eliminated form of string $x^{i,j}$.
      \item \label{itm:lkie}
            $\Reg{0}{A} \to \Reg{1}{A}$ when $\Reg{m_{\mathcal{R}}}{M} = \ket{0^r}$.
      \item For the $j$th string in the $i$th batch, with $1 \leq j \leq k$:
            \begin{enumerate}
              \item \label{itm:prf2}
                    $\Reg{0^n}{M} \to \ket{\Elm{x}^{i,j}}$ when $\Reg{f}{F} = \ket{1}$.
              \item \label{itm:mph2}
                    Apply $\Spl{c_{i,j}}{\gamma_{i,j}}$ on $\Reg{f}{F}$ when \Cond.
              \item \label{itm:0fwe}
                    $\Reg{\Elm{x}^{i,j}}{M} \to \ket{0^n}$ when $\Reg{f}{F} = \ket{1}$.
            \end{enumerate}
      \item \label{itm:kca0}
            $\Reg{1}{A} \to \Reg{0}{A}$ when $\Reg{m_{\mathcal{R}}}{M} = \ket{0^r}$.
      \item \label{itm:mnld}
            Apply the reverse of the elimination circuit.
    \end{enumerate}

    \caption{\raggedright BE-QRAM}
    \label{alg:ep62}
  \end{algorithm}

  Let us first show the correctness of \cref{alg:ep62}.
  Observe the following two facts:
  \begin{itemize}
    \item $\Elm{x}^{i,j}$ are binary strings different from each other, since the
          elimination circuit is linear reversible.
    \item In step \ref{itm:mph2}, the conditions ``\Cond'' is equivalent
          to ``$\Reg{\bm{m}}{M} = \ket{\Elm{x}^{i,j}}$''.
  \end{itemize}
  From these observations, it can be verified that at the beginning of the $i$th
  batch, the registers are in the state:
  \begin{equation}
    \sum_{i'=1}^{i-1} \sum_{j'=1}^{k} c_{i',j'} \Reg{x^{i',j'}}{M}\Reg{0}{A}
    \Reg{0}{F} + \gamma_{i,1}\Reg{0^{n}}{M}\Reg{0}{A}\Reg{1}{F}.
  \end{equation}
  Hence, at the end of the algorithm, the registers will be in the state
  $\Reg{\phi}{M}\Reg{0}{A}\Reg{0}{F}$ with probability 1.

  Next, we analyze the size of the constructed circuit.
  \begin{itemize}
    \item Step \labelcref{itm:vbpm,itm:mnld} can each be implemented by $r$ CNOT gates,
          according to \cref{lem:dgsw}.
    \item Step \labelcref{itm:lkie,itm:kca0} can each be implemented by $2r$ $X$ gates
          and one $C^r X$ gate, hence $\Ord{r}$ elementary gates.
    \item Step \labelcref{itm:prf2,itm:0fwe} can each be implemented by
          $\HW{\Elm{x}^{i,j}} \leq t$ CNOT gates.
    \item Step \ref{itm:mph2} can be implemented by $2(t - \HW{\Elm{x}^{i,j}}) \leq 2t$
          $X$ gate and one $C^{t+1} \Spl{c_{i,j}}{\gamma_{i,j}}$, hence $\Ord{t}$ elementary
          gates.
  \end{itemize}
  In total, the circuit size is:
    \begin{equation}
      \label{eq:4398}
      \Ord{\ceil{\frac{s}{k}} (r + kt)}.
    \end{equation}
  
  By taking $k = \log n - \log\log n$, we get the desired size $O(\frac{ns}{\log
      n} + n)$.
\end{proof}

Next, we show that the $\Ord{\frac{ns}{\log n} + n}$ size cannot be
asymptotically reduced for a certain class of algorithms.

\begin{theorem}
  Suppose $s \leq 2^{\delta n}$ with $0 \leq \delta < 1$.
  If an algorithm $\mathcal{A}$ for preparing $n$-qubit $s$-sparse states
  satisfies the following conditions:
  \begin{enumerate}
    \item $\mathcal{A}$ uses at most $\ply{n}$ ancillary qubits.
    \item $\mathcal{A}$ is not \emph{amplitude-aware}, i.e., for any $S \subseteq
            \BCube{n}$ and $c: S \to \mathbb{C}$ describing the state, $\mathcal{A}$
          outputs a parameterized circuit $U_S$ that depends only on $S$, along with
          parameters $\bm{\theta}_{S,c}$.
    \item $U_S \neq U_{S'}$ if $S \neq S'$.
  \end{enumerate}
  Then $\mathcal{A}$ requires $\Omg{\frac{ns}{\log n} + n}$ elementary gates in
  the worst case.
\end{theorem}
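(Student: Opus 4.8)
The plan is to prove the two summands separately: an information-theoretic counting argument yields the $\Omg{ns/\log n}$ term, and an elementary light-cone argument yields the $\Omg{n}$ term. Since $\frac{ns}{\log n}+n = \Tht{\max\qty{\frac{ns}{\log n},\,n}}$, it suffices to exhibit (families of) $s$-sparse inputs forcing each bound, because the worst case dominates both.

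For the counting term, I would fix the worst-case gate count $g$ of $\mathcal{A}$ over all $s$-sparse states and count in two ways. By condition (2), each binary support $S \subseteq \BCube{n}$ with $\abs{S}=s$ is assigned a circuit skeleton $U_S$ that is independent of the amplitudes, and for every admissible $c$ the realized circuit $U_S(\bm{\theta}_{S,c})$ has at most $g$ gates; since parameters do not change the gate count, $U_S$ itself has at most $g$ gates and acts on $N = n+m$ wires with $m=\ply{n}$. On one hand, the number of such skeletons is at most $(cN^2)^g$ for a constant $c$: each of the $\le g$ slots is a CNOT on an ordered pair of wires ($\le N^2$ choices), a single-qubit parameterized gate on one wire ($\le N$ choices), or empty, and all continuous data is absorbed into $\bm{\theta}_{S,c}$ by condition (2). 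On the other hand there are exactly $\binom{2^n}{s}$ supports, and injectivity (condition (3)) forces distinct skeletons, so $\binom{2^n}{s} \le (cN^2)^g$. Taking logarithms and using $\binom{2^n}{s}\ge (2^n/s)^s$ together with $s\le 2^{\delta n}$ (hence $\log s\le \delta n$) gives $\log\binom{2^n}{s}\ge s(n-\log s)\ge (1-\delta)\,sn = \Omg{sn}$, while $\log(cN^2)=\Ord{\log N}=\Ord{\log n}$ as $N=\ply{n}$. Therefore $g=\Omg{sn/\log n}$.

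For the $\Omg{n}$ term, I would exhibit a single $s$-sparse state that forces $\Omg{n}$ gates for any circuit, independent of the three conditions. Take any support $S$ containing $1^n$ with all amplitudes nonzero. Then the single-qubit reduced density matrix $\rho_i$ of the target on every memory qubit $i$ satisfies $\langle 1|\rho_i|1\rangle \ge \abs{c_{1^n}}^2 > 0$, so $\rho_i \neq \ket{0}\bra{0}$. If some memory qubit were never acted on it would remain in $\ket{0}$ and unentangled, contradicting $\rho_i\neq\ket{0}\bra{0}$; hence all $n$ memory qubits are touched, and since each elementary gate touches at most two wires, at least $n/2$ gates are required. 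Combining the two lower bounds yields $\Omg{ns/\log n + n}$.

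The main obstacle is the first counting step: one must justify that the object to enumerate is the finite combinatorial skeleton, so that the continuous single-qubit unitaries do not inflate the count to a continuum. This is exactly what conditions (2) and (3) are designed to guarantee — (2) confines all amplitude-dependent continuous data to $\bm{\theta}_{S,c}$ so that $U_S$ is a fixed skeleton, and (3) spreads the $\binom{2^n}{s}$ supports onto distinct skeletons so that the combinatorial count $(cN^2)^g$ genuinely upper-bounds $\binom{2^n}{s}$. Secondary care is needed to keep $\log s = \Ord{\delta n}$ so that $n-\log s = \Omg{n}$, which is precisely where the hypothesis $\delta<1$ enters.
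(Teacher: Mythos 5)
Your proposal is correct and follows essentially the same route as the paper: a counting argument comparing the number of amplitude-independent circuit skeletons over $\ply{n}$ wires, at most $\qty(cN^2)^g$, against the $\binom{2^n}{s}$ supports, with injectivity from condition (3), yielding $g = \Omg{ns/\log n}$, plus a separate $\Omg{n}$ term. If anything, you are more careful than the paper in the two places it is terse --- you make explicit the estimate $\log\binom{2^n}{s} \geq s(n-\log s) \geq (1-\delta)sn$ where the hypothesis $\delta < 1$ enters, and you justify $\Omg{n}$ via a worst-case input touching every memory qubit rather than asserting it from the qubit count alone.
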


\begin{proof}
  We may assume that the circuit output by $\mathcal{A}$ is composed of CNOT and
  $R_z, R_y$ gates, since single qubit gates admit ZYZ decomposition
  \cite{nielsen2010quantum}.
  Let $K$ be the maximum circuit size, and suppose $\mathcal{A}$ uses
  at most $n^C$ qubits, where $C$ is a constant independent of $n$.

  $\mathcal{A}$ can place a CNOT gate in $n^C(n^C - 1)$ different ways, and place
  $R_z, R_y$ in $n^C$ different ways each.
  Hence, the number of different $U_S$ can not exceed $\qty(n^C(n^C - 1) + 2
    n^C)^K$.
  On the other hand, the number of different $S$ is $\binom{n}{s}$.
  Therefore, by the assumption of $\mathcal{A}$, we have:
  \begin{align}
    \qty(n^C(n^C - 1) + 2 n^C)^K & \geq \binom{2^n}{s}        \\
    \Longrightarrow K            & = \Omg{\frac{ns}{\log n}}.
  \end{align}
  Since the state to prepare has $n$ qubits, we have $K = \Omg{n}$.
  Consequently, $K = \Omg{\frac{ns}{\log n} + n}$.
\end{proof}

Remark that the trivial lower bound of circuit size is $\Ord{s + n}$, where
$\Ord{s}$ arises from the dimension, and $\Ord{n}$ arises from the number of
qubits.
We are able to prove a non-trivial lower bound by imposing constraints to
circuit generating algorithms.
However, we believe that such constraints are reasonable, in a sense that they
are satisfied by previous (sparse) quantum state preparation algorithms, except
for the 1st constraint.
We will elaborate on this point in the discussion section.

\section{LT-QRAM}\label{sec:ssul}
In this section, we present another algorithm based on the CVO-QRAM algorithm, which we refer to as Lazy-Tree QRAM
  (LT-QRAM), that excels when the $s$ binary strings can be iterated in a
\emph{loopless} way.
That is, there exists a procedure that generate these strings in sequence, such that the first one is produced in linear time and each subsequent string in constant time.

To this end, we introduce the concept of Hamiltonian path for a set of binary
strings.

\begin{definition}
  Given a set $S \subseteq \BCube{n}$ of size $s$.
  Call an ordering of its elements $(x^1, \dots, x^s)$ a \emph{Hamiltonian path}
  of $S$, and define the \emph{length} of this path by $\sum_{j=2}^{s}
    \HD{x^{j-1}}{x^j}$.
\end{definition}

Our second algorithm outputs a circuit whose size is linear in the length of a
Hamiltonian path, multiplied by a factor of $\Ord{\log n}$.

\begin{theorem}
  \label{thm:lvsc}
  An $n$-qubit $s$-sparse state, whose binary support admits a
  Hamiltonian path of length $L$, can be prepared by a circuit of size $\Ord{L
      \log n + n}$ with $\Ord{n}$ ancillary qubits.
\end{theorem}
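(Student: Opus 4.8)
The plan is to keep the CVO-QRAM skeleton of \cref{alg:4som} but to process the strings in the order given by the Hamiltonian path $(x^1,\dots,x^s)$ and to load each string \emph{incrementally} from its predecessor instead of from $\ket{0^n}$. I would maintain the invariant that just before processing $x^j$ the registers carry
\begin{equation}
  \sum_{i<j} c_i \Reg{x^i}{M}\Reg{0}{F} + \gamma_j \Reg{x^{j-1}}{M}\Reg{1}{F},
\end{equation}
with $x^0 := 0^n$ and $\gamma_j := \sqrt{1-\sum_{i<j}\abs{c_i}^2}$. Step $j$ first transforms the \emph{active} branch (flag $\ket{1}$) from $\Reg{x^{j-1}}{M}$ to $\Reg{x^j}{M}$ by flipping, conditioned on the flag, exactly the $\HD{x^{j-1}}{x^j}$ bits where the two strings differ; summed along the path this contributes only $\Ord{L}$ elementary gates. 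It then applies $\Spl{c_j}{\gamma_j}$ to the flag conditioned on ``memory $= x^j$'', which splits amplitude $c_j$ off the active branch exactly as in the correctness argument of \cref{alg:4som}. Since the loaded branches hold distinct strings $x^i \neq x^j$, this conditional gate touches only the active branch and restores the invariant for $j+1$. Because $\gamma_{s+1}=0$, the active branch vanishes after the last string, leaving $\Reg{\phi}{M}\Reg{0}{F}$ on the data registers.

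The crux is implementing the conditional ``memory $= x^j$'' cheaply, since the naive $\Ord{n}$-control realization would cost $\Ord{sn}$ in total. To this end I would introduce a balanced binary AND-tree on $\Ord{n}$ ancilla qubits: a leaf ancilla $\ell_i$ stores the match bit $[m_i = x^j_i]$ and each internal node stores the AND of its two children, so that the root holds $1$ on precisely the branch whose memory equals the current target $x^j$. The conditional gate in step $j$ then becomes a single control on the root ancilla, costing $\Ord{1}$, and the $\Spl{c_j}{\gamma_j}$ gate applied to the flag acts correctly branch by branch.

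The ``lazy'' ingredient is that the tree is never rebuilt from scratch. The key observation is that, coupling the memory flip with the change of target, a bit position $i$ with $x^{j-1}_i = x^j_i$ requires no update at all, while for a differing bit the correct leaf value is obtained by \emph{toggling $\ell_i$ conditioned on the flag being $\ket{0}$}: in the active branch both the memory bit and the target flip, leaving the match intact, whereas in every loaded branch only the target flips. Hence step $j$ touches only $\HD{x^{j-1}}{x^j}$ leaves. For each touched leaf I would uncompute the $\Ord{\log n}$ AND-nodes on its root path top-down, toggle the leaf and flip the corresponding memory bit, then recompute the path bottom-up; each node uses $\Ord{1}$ Toffoli gates, so the update costs $\Ord{\HD{x^{j-1}}{x^j}\log n}$ and summed over the path gives $\Ord{L\log n}$. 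Building the tree for $x^1$, and at the very end uncomputing it by inverting the fixed AND-tree circuit for the final target $x^s$, each cost $\Ord{n}$. Altogether the circuit uses $\Ord{L\log n + n}$ elementary gates and $\Ord{n}$ ancillary qubits.

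I expect the main obstacle to be verifying that the lazy maintenance preserves, \emph{simultaneously across the whole superposition}, the invariant that every node holds the AND of its children for the current target, so that the inverse AND-tree circuit for $x^s$ returns all ancillas to $\ket{0}$ and disentangles them from the data registers. This amounts to checking the uncompute/toggle/recompute ordering branch by branch, and confirming that the coupled memory/leaf update indeed re-establishes $\ell_i = [m_i = x^j_i]$ in both the active and the loaded branches. This bookkeeping is routine but delicate; once the invariant is in place, the stated complexity follows immediately.
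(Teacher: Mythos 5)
Your proposal is correct and essentially mirrors the paper's LT-QRAM construction: Hamiltonian-path ordering with flag-controlled incremental memory updates, an AND-tree of match bits whose root controls the $\Spl{c_j}{\gamma_j}$ splitting gate, lazy updates of the $\Ord{\log n}$ nodes on each differing bit's root path conditioned on the flag, and an $\Ord{n}$ final uncomputation. The only divergence is an implementation detail: the paper computes a toggle mask along each root path via prefix-ANDs of sibling qubits in an auxiliary $(\log n + 1)$-qubit register and XORs it into the path nodes, whereas you uncompute the path top-down and recompute it bottom-up directly with Toffolis --- both variants cost $\Ord{\log n}$ per flipped bit and give the same $\Ord{L \log n + n}$ size with $\Ord{n}$ ancillas.
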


\begin{proof}
  Suppose $\ket{\phi} = \sum_{j=1}^s c_j \ket{x^j}$ is the state to be prepared,
  and $(x^1, \dots, x^s)$ is a Hamiltonian path of length $L$.
  Let $\gamma_j := \sqrt{1 - \sum_{i=1}^{j-1} \abs{c_i}^2}$ for $1 \leq j \leq
    s$.
  For simplicity, we assume $n$ to be a power of 2.
  One can easily extend the proof for general $n$ without changing the asymptotic
  order.

  In addition to the memory register $\Reg{\bm{m}}{M}$ and the flag register
  $\Reg{f}{F}$, we introduce an $n_t = 2n-1$-qubit tree register
  $\Reg{\bm{t}}{T}$ and an $n_h = \log_2 n + 1$-qubit ancillary register
  $\Reg{\bm{h}}{H}$.
  $\Reg{\bm{t}}{T}$ is structured as a full binary tree, whose leafs are in a
  one-to-one correspondence to the $n$ memory qubits $\Reg{\bm{m}}{M}$.
  Denote the leaf qubit corresponding to $\Reg{m_i}{M}$ by $\Reg{q^i_1}{T}$, and
  its father, grandfather, et al., by $\Reg{q^i_2}{T}, \dots,
    \Reg{q^i_{n_h}}{T}$.
  Hence, $\Reg{q^i_{n_h}}{T}$ is the root of the tree for all $1 \le i \le n$.
  Denote the sibling qubit of $\Reg{q^i_j}{T}$ by $\Reg{s^i_j}{T}$, for $1 \leq j
    < n_h$.
  Define the depth of a qubit in $\Reg{\bm{t}}{T}$ by the length of path from it
  to the root qubit, so that the $\Reg{q^i_j}{T}$ has depth $n_h - j + 1$.
  The layout of these registers and notations are illustrated in \cref{fig:rpni},
  for the special case of $n = 8$ and $i = 5$.
  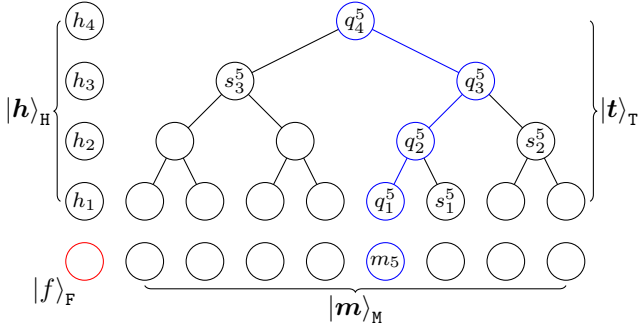
\begin{figure}
    \begin{tikzpicture}[fill=white,radius=0.25cm,x=0.8cm,y=0.8cm]
      \draw[draw=red] (0,0) circle;
      \draw ++(1,0) circle ++(1,0) circle ++(1,0) circle ++(1,0) circle ++(1,0) ++(1,0) circle ++(1,0) circle ++(1,0) circle;
      \draw ++(0,1) circle ++(0,1) circle ++(0,1) circle ++(0,1) circle;
      \draw (1,1) -- (1.5,2) (2,1) -- (1.5,2) (3,1) -- (3.5,2) (4,1) -- (3.5,2) (6,1) -- (5.5,2) (7,1) -- (7.5,2) (8,1) -- (7.5,2);
      \filldraw +(1,1) circle +(2,1) circle +(3,1) circle +(4,1) circle +(6,1) circle +(7,1) circle +(8,1) circle;
      \draw (1.5,2) -- (2.5,3) (3.5,2) -- (2.5,3) (7.5,2) -- (6.5,3);
      \filldraw +(1.5,2) circle +(3.5,2) circle +(7.5,2) circle;
      \draw (2.5,3) -- (4.5,4);
      \filldraw +(2.5,3) circle;
      \filldraw ;
      \draw[draw=blue] +(5,1) -- +(5.5,2) -- +(6.5,3) -- +(4.5,4);
      \filldraw[draw=blue] +(5,0) circle +(5,1) circle +(5.5,2) circle +(6.5,3) circle +(4.5,4) circle;
      \begin{scope}[decoration=brace,auto=left]
        \draw[decorate] (8,-0.4) -- +(-7,0) node[midway] {$\Reg{\bm{m}}{M}$};
        \draw[decorate] (8.4,4) -- +(0,-3) node[midway] {$\Reg{\bm{t}}{T}$};
        \draw[decorate] (-0.4,1) -- +(0,+3) node[midway] {$\Reg{\bm{h}}{H}$};
        \node at (-0.5,-0.5) {$\Reg{f}{F}$};
      \end{scope}
      \begin{scope}[font=\footnotesize]
        \node at (5,0) {$m_5$};
        \node at (5,1) {$q^5_1$};
        \node at (5.5,2) {$q^5_2$};
        \node at (6.5,3) {$q^5_3$};
        \node at (4.5,4) {$q^5_4$};
        \node at (6,1) {$s^5_1$};
        \node at (7.5,2) {$s^5_2$};
        \node at (2.5,3) {$s^5_3$};
        \node at (0,1) {$h_1$};
        \node at (0,2) {$h_2$};
        \node at (0,3) {$h_3$};
        \node at (0,4) {$h_4$};
      \end{scope}
    \end{tikzpicture}
    \caption{
      Layout of registers $\Reg{\bm{m}}{M}\Reg{\bm{t}}{T}\Reg{\bm{h}}{H}\Reg{f}{F}$
      in the proof of \cref{thm:lvsc}.
      The lines are meant for visualization, rather than hardware topology.
    }
    \label{fig:rpni}
  \end{figure}

  \Cref{alg:bnwj} performs the transformation
  \begin{equation}
    \Reg{0^n}{M}\Reg{0^{n_t}}{T}\Reg{0^{n_h}}{H}\Reg{1}{F} \to
    \Reg{\phi}{M}\Reg{0^{n_t}}{T}\Reg{0^{n_h}}{H}\Reg{0}{F}.
  \end{equation}

  \begin{algorithm}
    \raggedright

      \Desc{Input}{$\qty{(c_j, x^j)}_{1 \le j \le s}$}
      \Desc{Output}{A circuit performing $\Reg{0^n}{M}\Reg{0^{n_t}}{T}\Reg{0^{n_h}}{H}\Reg{1}{F} \to
          \Reg{\phi}{M}\Reg{0^{n_t}}{T}\Reg{0^{n_h}}{H}\Reg{0}{F}$}
      \vspace{1ex}

    \begin{enumerate}
      \item \label{itm:qhu7}
            $\Reg{0^n}{M}\Reg{0^{n_t}}{T} \to \Reg{x^1}{M}\Reg{1^{n_t}}{T}$.
      \item \label{itm:09d5}
            Apply $\Spl{c_1}{1}$ to $\Reg{f}{F}$.
      \item \label{itm:vfnu}
            For the $i$th string with $2 \leq i \leq s$:
            \begin{enumerate}
              \item \label{itm:o8uc}
                    $\Reg{x^{i-1}}{M} \to \Reg{x^i}{M}$ when $\Reg{f}{F} = \ket{1}$.
              \item \label{itm:rfio}
                    For each $j$ such that $x^{i-1}_j \neq x^i_j$:
                    \begin{enumerate}
                      \item \label{itm:atiu}
                            Apply $X_{h_1}$ when $\Reg{f}{F} = \ket{0}$.
                      \item \label{itm:n9bg}
                            For $2 \leq k \leq n_h$, apply $\Tof_{h_{k-1},s^j_{k-1};h_k}$.
                      \item \label{itm:hr2h}
                            For $1 \leq k \leq n_h$, apply $\CNOT_{h_k;q^j_k}$.
                      \item Reverse step \labelcref{itm:atiu,itm:n9bg}.
                    \end{enumerate}
              \item \label{itm:4mru}
                    Apply $\Spl{c_i}{\gamma_i}$ when the root qubit of $\Reg{\bm{t}}{T}$ is
                    $\ket{1}$.
            \end{enumerate}
      \item \label{itm:0fbr}
            For $1 \leq l < n_h$, and for each qubit of depth $l$th in the tree
            $\Reg{\bm{t}}{T}$, apply a Toffoli gate with controls on its two children and
            target on itself.
      \item \label{itm:vik7}
            For $1 \leq j \leq n$, apply $X^{1-x^s_j}_{m_j}\CNOT_{m_j;q^j_1}
              X^{1-x^s_j}_{m_j}$.
    \end{enumerate}

    \caption{\raggedright LT-QRAM}
    \label{alg:bnwj}
  \end{algorithm}

  Let us first show the correctness of \cref{alg:bnwj}.
  It suffices to show that
  \begin{enumerate}
    \item At the start of the iteration of step \ref{itm:vfnu}, the registers are in the
          state:
          \begin{multline}
            \sum_{i'=1}^{i-1} c_{i'} \Reg{x^{i'}}{M}\Reg{\bm{t}(x^{i'}, x^{i-1})}{T}
            \Reg{0^{n_h}}{H}\Reg{0}{F} \\
            + \gamma_i \Reg{x^{i-1}}{M}\Reg{1^{n_t}}{T}\Reg{0^{n_h}}{H}\Reg{1}{F}.
          \end{multline}
          Here, $\bm{t}(x, y)$ indicates that the tree register is an AND tree of whether
          segments of $x$ equals that of $y$.
          Formally, the leafs $\Reg{q^1_1 \dots q^n_1}{T} = \ket{\Flip{x \oplus y}}$, and
          each interior qubit takes the value of the AND of its two children.
          Hence, the root qubit is $\ket{1}$ if and only if $x = y$.
    \item Step \labelcref{itm:0fbr,itm:vik7} correctly uncompute the AND tree.
  \end{enumerate}
  For the first point, the correctness of step \labelcref{itm:qhu7,itm:09d5} is
  obvious.
  We need to show that step \ref{itm:rfio} correctly updates the AND tree, after
  that the correctness of \labelcref{itm:o8uc,itm:4mru} is obvious.
  By changing $x^{i-1}$ to $x^i$, only paths ending in the root and starting at
  leaves that correspond to the differing bits between $x^{i-1}$ and $x^i$, need
  to be updated.
  Moreover, these paths can be updated in sequence.
  Consider the path $\Reg{q^j_1}{T}, \dots, \Reg{q^j_{n_h}}{T}$.
  After step \labelcref{itm:atiu,itm:n9bg}, we have that when
  $\Reg{f}{F} = \ket{0}$,
  \begin{equation}
    \Reg{h_k}{H} =
    \begin{cases}
      \ket{1},                               & k = 1, \\
      \ket{s_1 \wedge \dots \wedge s_{k-1}}, & k > 1.
    \end{cases}
  \end{equation}
  Then after step \ref{itm:hr2h}, $\Reg{q^j_k}{T}$ will be updated to $\ket{q^j_k
      \oplus h_k}$.
  It is not hard to verify that this is the desired update.
  When $\Reg{f}{F} = \ket{1}$, $\Reg{\bm{t}}{T}\Reg{\bm{h}}{H}$ will not be
  changed.
  Hence, we have proven the first point.
  For the second point, first notice that before step \ref{itm:0fbr}, the
  registers are in the state:
  \begin{equation}
    \sum_{i=1}^{s} c_i \Reg{x^i}{M}\Reg{\bm{t}(x^i, x^s)}{T}
    \Reg{0^{n_h}}{H}\Reg{0}{F}.
  \end{equation}
  After step \ref{itm:0fbr}, all interior qubits will be set to $\ket{0}$, by the
  definition of the AND tree.
  Since the leafs $\Reg{q^1_1, \dots, q^n_1}{T} = \ket{\Flip{x^i \oplus x^s}}$
  for $x^i$, step \ref{itm:vik7} will set all leafs to $\ket{0}$ too.
  Hence, we have proven the second point.

  Next, we analyze the size of the constructed circuit.
  \begin{enumerate}
    \item Step \labelcref{itm:qhu7,itm:09d5} can be implemented by $\HW{x^1} + n_t + 1 =
            \Ord{n}$ 1-qubit gates.
    \item Step \ref{itm:o8uc} can be implemented by $\HD{x^{i-1}}{x^i}$ CNOT gates.
    \item Step \ref{itm:rfio} can be implemented by $\Ord{\HD{x^{i-1}}{x^i} n_h} =
            \Ord{\HD{x^{i-1}}{x^i} \log n}$ elementary gates.
    \item Step \ref{itm:4mru} can be implemented by $O(1)$ elementary gates.
    \item Step \labelcref{itm:0fbr,itm:vik7} can be implemented by $\Ord{n_t} = \Ord{n}$
          elementary gates.
  \end{enumerate}
  Recall the definition of $L$.
  In total, the circuit size is $\Ord{L \log n + n}$.
\end{proof}

Since the circuit size is linear in $L$, one would be tempted to find the
shortest Hamiltonian path given a set of binary strings.
While this task is in general computationally hard \cite{ernvall1985np} and the
length of the shortest Hamiltonian path could be $\Tht{sn}$
\cite{cohen1996traveling}, there are cases when there exists a short
Hamiltonian path \emph{and} the path is easy to compute.
  An example is the $U(1)$-invariant state \mbox{\cite{raveh2024deterministic}},
  including \emph{Bethe state} \mbox{\cite{van2021preparing}} and \emph{Dicke
    state} \mbox{\cite{bartschi2019deterministic}} as a special case, whose binary
  support consists of all binary strings of Hamming weight $k$.
As shown in the following corollary, in terms of circuit size, our algorithm
outperforms a recent work for this task \cite{raveh2024deterministic} by a
factor $\frac{n}{\log n}$, and is nearly asymptotically optimal.

\begin{corollary}
  Any $U(1)$-invariant state $\sum_{x \in \BCube{n}, \HW{x} = k} c_x \ket{x}$ can
  be prepared by a circuit of size $\Ord{\binom{n}{k} \log n}$ with $\Ord{n}$
  ancillary qubits.
  Moreover, the classical complexity for generating the circuit is also
  $\Ord{\binom{n}{k} \log n}$.
\end{corollary}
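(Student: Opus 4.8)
The plan is to reduce the statement to \cref{thm:lvsc} by exhibiting a short Hamiltonian path through the binary support that is, in addition, looplessly generable. First I would note that the binary support of a $U(1)$-invariant state is exactly the set of all length-$n$ strings of Hamming weight $k$, so the sparsity is $s = \binom{n}{k}$. To match the claimed circuit size, it then suffices to supply \cref{thm:lvsc} with a Hamiltonian path of length $L = \Ord{\binom{n}{k}}$, together with an argument that the path is cheap to compute.

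Since any two distinct weight-$k$ strings differ in at least two positions, every ordering of the support has length at least $2(s-1)$, so the best achievable is a path in which consecutive strings satisfy $\HD{x^{j-1}}{x^j} = 2$; such a path attains $L = 2\qty(\binom{n}{k}-1) = \Ord{\binom{n}{k}}$. A path with this property is precisely a \emph{combinatorial Gray code} for the $k$-subsets of $[n]$, in which each step removes one element from the subset and inserts another (equivalently, flips one $1 \to 0$ and one $0 \to 1$ bit). The existence of such a code for all $1 \le k \le n$ is classical: the revolving-door (homogeneous) Gray code enumerates all $k$-subsets so that neighbors differ by a single exchange. Feeding this ordering as $(x^1, \dots, x^s)$ into \cref{thm:lvsc} yields a circuit of size $\Ord{L\log n + n} = \Ord{\binom{n}{k}\log n + n}$, and the additive $n$ is absorbed because $\binom{n}{k}\log n \ge n\log n \ge n$ for $1 \le k \le n-1$ (for $k \in \{0,n\}$ the state is a single basis string, preparable in $\Ord{n}$ gates), giving the stated $\Ord{\binom{n}{k}\log n}$.

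For the classical complexity it is enough that the revolving-door code be generated \emph{looplessly}: there is an algorithm producing the first subset in $\Ord{n}$ time and each successive subset in $\Ord{1}$ time, reporting at every step the single pair of positions $(j_{\mathrm{out}}, j_{\mathrm{in}})$ that change. With these two positions in hand, the gates emitted by \cref{alg:bnwj} for the transition from $x^{i-1}$ to $x^i$ comprise $\Ord{1}$ gates in step \ref{itm:o8uc} together with the tree updates of step \ref{itm:rfio}, each of which touches one root-to-leaf path of length $n_h = \Ord{\log n}$. Thus each transition is produced in $\Ord{\log n}$ classical time and the whole circuit in $\Ord{\binom{n}{k}\log n}$ time, matching the size.

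The main obstacle I anticipate is the bookkeeping around the loopless generator rather than the quantum construction. One must confirm both that a single-exchange Gray code (Hamming distance exactly $2$ between neighbors) exists for every $1 \le k \le n$, and that its loopless implementation reports the changed positions in $\Ord{1}$ per step, so that no $\Ord{n}$-time comparison of consecutive strings ever enters the classical cost. Once the revolving-door code and its loopless driver are in place, the remainder is a direct invocation of \cref{thm:lvsc} together with the length and complexity accounting above.
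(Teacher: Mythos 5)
Your proposal is correct and follows essentially the same route as the paper: both reduce to \cref{thm:lvsc} via a minimal-change (Gray code) ordering of the weight-$k$ strings, yielding a Hamiltonian path of length $L = \Ord{\binom{n}{k}}$ that a loopless generator produces cheaply enough for the classical complexity claim --- the paper cites the loop-free algorithm of \cite{vajnovszki2006loop} where you invoke the classical revolving-door code, but these play the identical role. Your extra bookkeeping (the lower bound $\HD{x^{j-1}}{x^j} \geq 2$, absorbing the additive $\Ord{n}$ term, and the per-transition $\Ord{\log n}$ emission cost) is sound and merely makes explicit what the paper leaves implicit.
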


\begin{proof}
  For the $\binom{n}{k}$ binary string of Hamming weight $k$, there exists a
  Hamiltonian path of length $L = 2\binom{n}{k}$ and the path can be computed in
  $\Ord{L}$ classical time \cite{vajnovszki2006loop}.
  Hence, the corollary follows directly from \cref{thm:lvsc}.
\end{proof}

\section{Numerical results}
We numerically test and benchmark our algorithms and present the results in
\cref{fig:ph1x}.
Instead of circuit size, in the experiments we count the number of CNOT gates.
Remark that usually the CNOT count is illustrative of circuit size \cite{shende2004minimal}, as is the case in our experiments.
We benchmark our two algorithms separately, plotting the CNOT count against
qubit numbers up to 6000.

Our first algorithm, the BE-QRAM algorithm, undergoes testing on randomly
generated sparse states.
To sample an $n$-qubit $s$-sparse state, we uniformly draw $s$ binary strings
of length $n$ at random.
Since the amplitudes do not influence circuit size for both our algorithm and
previous ones, there is no need to sample amplitudes.
We benchmark our algorithm and the CVO-algorithm, computing the CNOT count normalized
by $(ns)^{-1}$ for varying $n$ and $s = n$.
As depicted in \cref{fig:g9rg}, the CNOT count of our algorithm
fits well as $\Tht{\frac{ns}{\log n - \log\log n}}$, while the
CVO-QRAM algorithm scales as $\Tht{ns}$.
Notice that after normalization, the cost of the CVO-QRAM algorithm remains
constant for increasing $n$, while ours decreases.

Our second algorithm, the LT-QRAM algorithm, is examined on $U(1)$-invariant
states.
The binary support of a $U(1)$-invariant state with $k$ down-spins in a chain
of length $n$ constitutes the set of all length-$n$ strings with a Hamming
weight of $k$.
Furthermore, there exists a minimal Hamiltonian path of length $2
  \binom{n}{k}$.
Consequently, the circuit size of our algorithm admits a closed form dependent
on $n$ and $k$.
We benchmark our algorithm and the previous algorithm \cite{raveh2024deterministic}, computing the CNOT count normalized
by $\qty(\binom{n}{k} k)^{-1}$ for varying $n$ and $k = \floor{\frac{n}{2}}$.
As illustrated in \cref{fig:tgda}, the CNOT count
of our algorithm fits well as $\Tht{\binom{n}{k} \log n}$, whereas
\cite{raveh2024deterministic} scales as
$\Tht{\binom{n}{k} k}$.
Notice that after normalization, the cost of \cite{raveh2024deterministic} remains
constant for increasing $n$, while ours decreases.

We implemented our algorithm in Python and ran the experiments on a 4.9 GHz Intel Core i7.

\begin{figure*}
  \centering
  \begin{subfigure}[b]{0.49\textwidth}
    \centering
    \includegraphics[width=\textwidth]{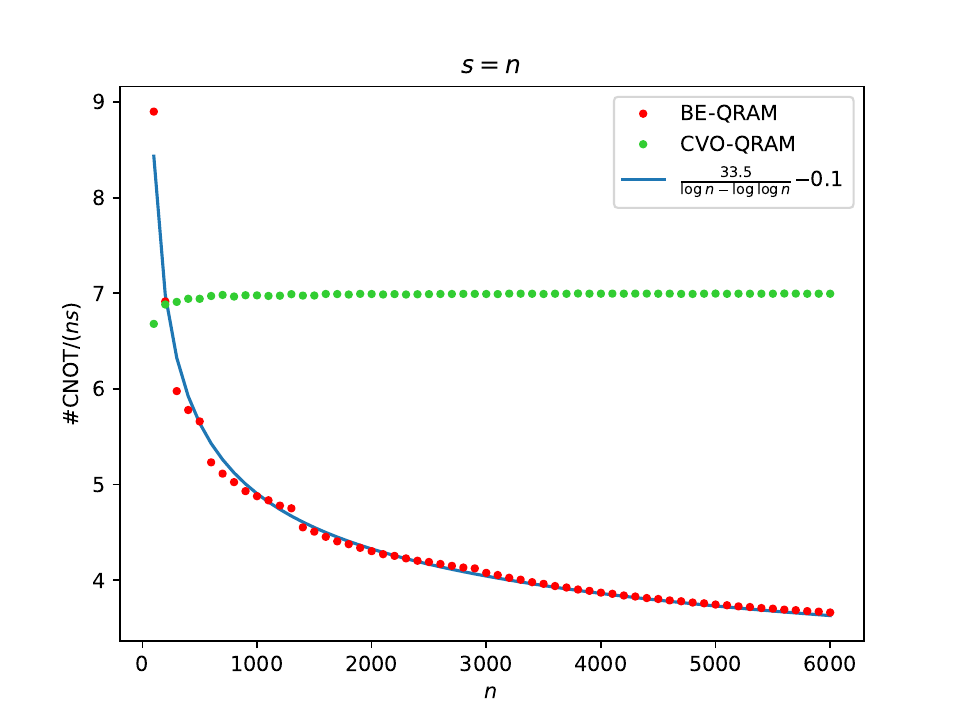}
    \caption{$s$-sparse state}
    \label{fig:g9rg}
  \end{subfigure}
  \hfill
  \begin{subfigure}[b]{0.49\textwidth}
    \centering
    \includegraphics[width=\textwidth]{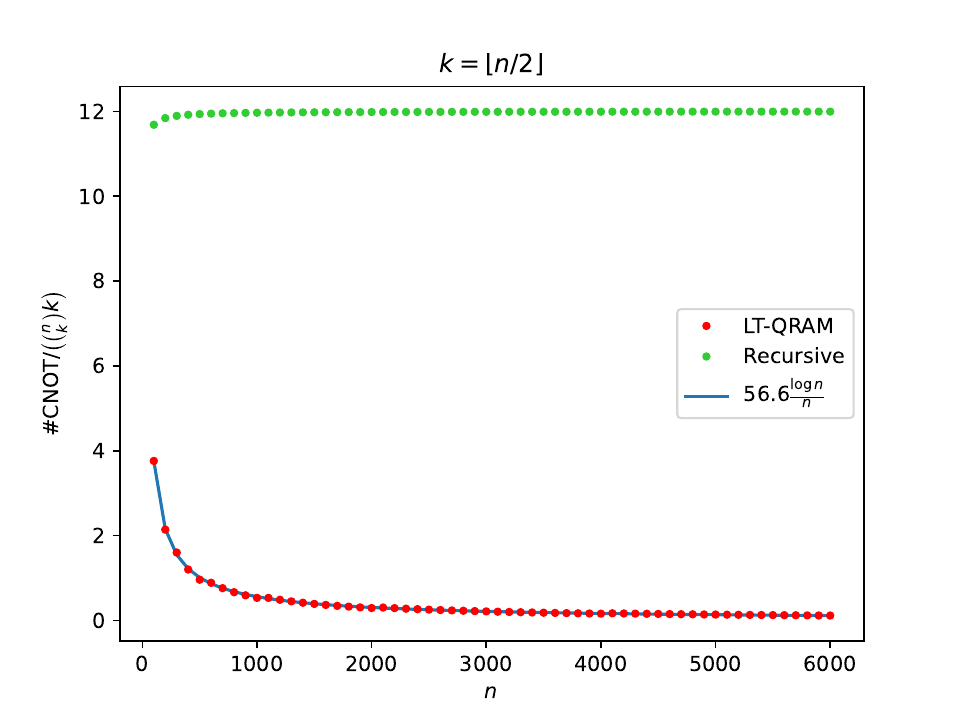}
    \caption{$U(1)$-invariant state with $k$ down-spins}
    \label{fig:tgda}
  \end{subfigure}
  \caption{Benchmark of our algorithms on sparse states and $U(1)$-invariant states. We plot the ``normalized'' number of CNOT gates against the number of qubits $n$ up to 6000 for (a) $s$ sparse state with $s = n$; (b) $U(1)$-invariant state with $k = \floor{n/2}$ down-spins. The red and green points represent the data of our algorithms and previous algorithms respectively, while the blue lines fit our data.}
  \label{fig:ph1x}
\end{figure*}

\section{Conclusion and discussion}\label{sec:u6jp}

In this work, we propose two efficient algorithms for the task of sparse
quantum state preparation, with a focus on optimizing circuit size.
The first algorithm achieves a size of $\Ord{\frac{ns}{\log n} + n}$, improving
previous upper bound by a factor of $\Ord{\log n}$.
A matching lower bound is shown for any amplitude-unaware algorithms with
moderate assumptions.
The second algorithm is tailored for sparse states whose binary
  support admits a short Hamiltonian path, achieving a size of $\Ord{L \log n}$,
where $L$ is the length of the path.
We show an application of the second algorithm to preparing $U(1)$-invariant
state, including Bethe states, outperforming previous work by a factor of
$\Ord{n / \log n}$.

We are only able to prove the optimality of the first algorithm for a certain
class of algorithms, rather than the problem itself.
It remains open how to close the $\Ord{s + n}$ vs $\Ord{\frac{ns}{\log n} + n}$
gap.
To the best of our knowledge, previous size lower bounds for quantum circuits
are obtained via the \emph{dimensionality argument}.
That is, the number of continuous variables, hence the circuit size, cannot be
fewer than the dimension of the problem.
In terms of sparse quantum state preparation, such argument only gives a lower
bound of $\Omg{s}$, which seems to contradict with the intuition that the
choice of bases $S$ has an impact on efficiency.
A possible bad case could be when the binary strings in $S$ differ from each
other at $\Omg{n}$ positions, which is possible whenever $s = 2^{\delta n}, 0
  \leq \delta < 1$ by the Gilbert–Varshamov bound \cite{cohen1996traveling}.
Based on our \emph{algorithmic} lower bound, there could be 3 possibilities:
\begin{enumerate}
  \item Either the true lower bound can be strengthened, which calls for new techniques
        for proving quantum circuit lower bounds.
  \item Or, there exists better algorithms, which we believe must be amplitude-aware.
  \item Or, there is a gap between the algorithmic and true lower bound, which means a
        compact circuit exists but is computationally hard to find.
\end{enumerate}
We tend to believe the first case to be true.

Another interesting open question is to decide whether our second algorithm
could be improved, especially for preparing $U(1)$-invariant states.
In this special case, our algorithm is a factor of $\Ord{\log n}$ away from the
lower bound obtained by dimensionality argument.
We believe this factor could be dropped.

\section*{Acknowledgment}
This work was supported in part by the National Natural Science Foundation of
China Grants No\period 62325210, 62272441, 12204489, 62301531, and the
Strategic Priority Research Program of Chinese Academy of Sciences Grant
No.XDB28000000.

\bibliographystyle{IEEEtran}

\begin{thebibliography}{10}
\providecommand{\url}[1]{#1}
\csname url@samestyle\endcsname
\providecommand{\newblock}{\relax}
\providecommand{\bibinfo}[2]{#2}
\providecommand{\BIBentrySTDinterwordspacing}{\spaceskip=0pt\relax}
\providecommand{\BIBentryALTinterwordstretchfactor}{4}
\providecommand{\BIBentryALTinterwordspacing}{\spaceskip=\fontdimen2\font plus
\BIBentryALTinterwordstretchfactor\fontdimen3\font minus \fontdimen4\font\relax}
\providecommand{\BIBforeignlanguage}[2]{{%
\expandafter\ifx\csname l@#1\endcsname\relax
\typeout{** WARNING: IEEEtran.bst: No hyphenation pattern has been}%
\typeout{** loaded for the language `#1'. Using the pattern for}%
\typeout{** the default language instead.}%
\else
\language=\csname l@#1\endcsname
\fi
#2}}
\providecommand{\BIBdecl}{\relax}
\BIBdecl

\bibitem{berry2015simulating}
D.~W. Berry, A.~M. Childs, R.~Cleve, R.~Kothari, and R.~D. Somma, ``Simulating hamiltonian dynamics with a truncated taylor series,'' \emph{Physical review letters}, vol. 114, no.~9, p. 090502, 2015.

\bibitem{low2017optimal}
G.~H. Low and I.~L. Chuang, ``Optimal hamiltonian simulation by quantum signal processing,'' \emph{Physical review letters}, vol. 118, no.~1, p. 010501, 2017.

\bibitem{low2019hamiltonian}
------, ``Hamiltonian simulation by qubitization,'' \emph{Quantum}, vol.~3, p. 163, 2019.

\bibitem{berry2015hamiltonian}
D.~W. Berry, A.~M. Childs, and R.~Kothari, ``Hamiltonian simulation with nearly optimal dependence on all parameters,'' in \emph{2015 {IEEE 56th} annual symposium on foundations of computer science}.\hskip 1em plus 0.5em minus 0.4em\relax IEEE, 2015, pp. 792--809.

\bibitem{schuld2015introduction}
M.~Schuld, I.~Sinayskiy, and F.~Petruccione, ``An introduction to quantum machine learning,'' \emph{Contemporary Physics}, vol.~56, no.~2, pp. 172--185, 2015.

\bibitem{biamonte2017quantum}
J.~Biamonte, P.~Wittek, N.~Pancotti, P.~Rebentrost, N.~Wiebe, and S.~Lloyd, ``Quantum machine learning,'' \emph{Nature}, vol. 549, no. 7671, pp. 195--202, 2017.

\bibitem{kerenidis2016quantum}
I.~Kerenidis and A.~Prakash, ``Quantum recommendation systems,'' \emph{arXiv preprint arXiv:1603.08675}, 2016.

\bibitem{rebentrost2018quantum}
P.~Rebentrost, A.~Steffens, I.~Marvian, and S.~Lloyd, ``Quantum singular-value decomposition of nonsparse low-rank matrices,'' \emph{Physical review A}, vol.~97, no.~1, p. 012327, 2018.

\bibitem{harrow2009quantum}
A.~W. Harrow, A.~Hassidim, and S.~Lloyd, ``Quantum algorithm for linear systems of equations,'' \emph{Physical review letters}, vol. 103, no.~15, p. 150502, 2009.

\bibitem{wossnig2018quantum}
L.~Wossnig, Z.~Zhao, and A.~Prakash, ``Quantum linear system algorithm for dense matrices,'' \emph{Physical review letters}, vol. 120, no.~5, p. 050502, 2018.

\bibitem{kerenidis2019q}
I.~Kerenidis, J.~Landman, A.~Luongo, and A.~Prakash, ``q-means: A quantum algorithm for unsupervised machine learning,'' \emph{Advances in neural information processing systems}, vol.~32, 2019.

\bibitem{kerenidis2021quantum}
I.~Kerenidis and J.~Landman, ``Quantum spectral clustering,'' \emph{Physical Review A}, vol. 103, no.~4, p. 042415, 2021.

\bibitem{rebentrost2014quantum}
P.~Rebentrost, M.~Mohseni, and S.~Lloyd, ``Quantum support vector machine for big data classification,'' \emph{Physical review letters}, vol. 113, no.~13, p. 130503, 2014.

\bibitem{plesch2011quantum}
M.~Plesch and {\v{C}}.~Brukner, ``Quantum-state preparation with universal gate decompositions,'' \emph{Physical Review A}, vol.~83, no.~3, p. 032302, 2011.

\bibitem{gleinig2021efficient}
N.~Gleinig and T.~Hoefler, ``An efficient algorithm for sparse quantum state preparation,'' in \emph{2021 58th {ACM/IEEE} Design Automation Conference {(DAC)}}.\hskip 1em plus 0.5em minus 0.4em\relax IEEE, 2021, pp. 433--438.

\bibitem{araujo2021entanglement}
I.~F. Araujo, C.~Blank, and A.~J. da~Silva, ``Entanglement as a complexity measure for quantum state preparation,'' \emph{J}, 2021.

\bibitem{rattew2022preparing}
A.~G. Rattew and B.~Koczor, ``Preparing arbitrary continuous functions in quantum registers with logarithmic complexity,'' \emph{arXiv preprint arXiv:2205.00519}, 2022.

\bibitem{dur2000three}
W.~D{\"u}r, G.~Vidal, and J.~I. Cirac, ``Three qubits can be entangled in two inequivalent ways,'' \emph{Physical Review A}, vol.~62, no.~6, p. 062314, 2000.

\bibitem{bartschi2019deterministic}
A.~B{\"a}rtschi and S.~Eidenbenz, ``Deterministic preparation of dicke states,'' in \emph{International Symposium on Fundamentals of Computation Theory}.\hskip 1em plus 0.5em minus 0.4em\relax Springer, 2019, pp. 126--139.

\bibitem{cottrell2019build}
W.~Cottrell, B.~Freivogel, D.~M. Hofman, and S.~F. Lokhande, ``How to build the thermofield double state,'' \emph{Journal of High Energy Physics}, vol. 2019, no.~2, pp. 1--43, 2019.

\bibitem{van2021preparing}
J.~S.~V. Dyke, G.~S. Barron, N.~J. Mayhall, E.~Barnes, and S.~E. Economou, ``Preparing bethe ansatz eigenstates on a quantum computer,'' \emph{PRX Quantum}, vol.~2, no.~4, p. 040329, 2021.

\bibitem{ben2005fast}
M.~Ben-Or and A.~Hassidim, ``Fast quantum byzantine agreement,'' in \emph{Proceedings of the thirty-seventh annual {ACM} symposium on Theory of computing}, 2005, pp. 481--485.

\bibitem{malvetti2021quantum}
E.~Malvetti, R.~Iten, and R.~Colbeck, ``Quantum circuits for sparse isometries,'' \emph{Quantum}, vol.~5, p. 412, 2021.

\bibitem{de2022double}
T.~M. {de Veras}, L.~D. da~Silva, and A.~J. da~Silva, ``Double sparse quantum state preparation,'' \emph{Quantum Information Processing}, vol.~21, no.~6, p. 204, 2022.

\bibitem{mozafari2022efficient}
F.~Mozafari, G.~D. Micheli, and Y.~Yang, ``Efficient deterministic preparation of quantum states using decision diagrams,'' \emph{Physical Review A}, vol. 106, no.~2, p. 022617, 2022.

\bibitem{ramacciotti2023simple}
D.~Ramacciotti, A.-I. Lefterovici, and A.~F. Rotundo, ``A simple quantum algorithm to efficiently prepare sparse states,'' \emph{arXiv preprint arXiv:2310.19309}, 2023.

\bibitem{grover2002creating}
L.~Grover and T.~Rudolph, ``Creating superpositions that correspond to efficiently integrable probability distributions,'' \emph{arXiv preprint quant-ph/0208112}, 2002.

\bibitem{zhang2022quantum}
X.-M. Zhang, T.~Li, and X.~Yuan, ``Quantum state preparation with optimal circuit depth: Implementations and applications,'' \emph{Physical Review Letters}, vol. 129, no.~23, p. 230504, 2022.

\bibitem{nielsen2010quantum}
M.~A. Nielsen and I.~L. Chuang, \emph{Quantum computation and quantum information}.\hskip 1em plus 0.5em minus 0.4em\relax Cambridge university press, 2010.

\bibitem{gidney2015constructing}
C.~Gidney, ``Constructing large controlled nots,'' \url{https://algassert.com/circuits/2015/06/05/Constructing-Large-Controlled-Nots.html}, 2015.

\bibitem{barenco1995elementary}
A.~Barenco, C.~H. Bennett, R.~Cleve, D.~P. DiVincenzo, N.~Margolus, P.~Shor, T.~Sleator, J.~A. Smolin, and H.~Weinfurter, ``Elementary gates for quantum computation,'' \emph{Physical review A}, vol.~52, no.~5, p. 3457, 1995.

\bibitem{markov2008optimal}
K.~Markov, I.~Patel, and J.~Hayes, ``Optimal synthesis of linear reversible circuits,'' \emph{Quantum Information and Computation}, vol.~8, no. 3\&4, pp. 0282--0294, 2008.

\bibitem{zakablukov2017asymptotic}
D.~V. Zakablukov, ``On asymptotic gate complexity and depth of reversible circuits without additional memory,'' \emph{Journal of Computer and System Sciences}, vol.~84, pp. 132--143, 2017.

\bibitem{ernvall1985np}
J.~Ernvall, J.~Katajainen, and M.~Penttonen, ``{NP-completeness} of the hamming salesman problem,'' \emph{BIT Numerical Mathematics}, vol.~25, pp. 289--292, 1985.

\bibitem{cohen1996traveling}
G.~Cohen, S.~Litsyn, and G.~Zemor, ``On the traveling salesman problem in binary hamming spaces,'' \emph{IEEE Transactions on Information Theory}, vol.~42, no.~4, pp. 1274--1276, 1996.

\bibitem{raveh2024deterministic}
D.~Raveh and R.~I. Nepomechie, ``Deterministic bethe state preparation,'' \emph{arXiv preprint arXiv:2403.03283}, 2024.

\bibitem{vajnovszki2006loop}
V.~Vajnovszki and T.~Walsh, ``A loop-free two-close gray-code algorithm for listing k-ary dyck words,'' \emph{Journal of Discrete Algorithms}, vol.~4, no.~4, pp. 633--648, 2006.

\end{thebibliography}


\end{document}